\documentclass{llncs}

\usepackage[utf8]{inputenc} 

\usepackage{fullpage}

\usepackage{amssymb}

\begin{document}

\title{Local Improvement Gives Better Expanders} \author{Michael Lampis \thanks{Research supported by ERC Grant 226203}}
\institute{KTH Royal Institute of Technology}

\maketitle 

\begin{abstract}

It has long been known that random regular graphs are with high probability
good expanders. This was first established in the 1980s by Bollobás by directly
calculating the probability that a set of vertices has small expansion and then
applying the union bound.

In this paper we improve on this analysis by relying on a simple high-level
observation: if a graph contains a set of vertices with small expansion then it
must also contain such a set of vertices that is locally optimal, that is, a
set whose expansion cannot be made smaller by exchanging a vertex from the set
with one from the set's complement. We show that the probability that a set of
vertices satisfies this additional property is significantly smaller. Thus,
after again applying the union bound, we obtain improved lower bounds on the
expansion of random $\Delta$-regular graphs for $\Delta\ge 4$. In fact, the
gains from this analysis increase as $\Delta$ grows, a fact we explain by
extending our technique to general $\Delta$. Thus, in the end we obtain an
improvement not only for some small special cases but on the general asymptotic
bound on the expansion of $\Delta$-regular graphs given by Bollobás.

\end{abstract}

\newpage

\setcounter{page}{1}

\section{Introduction}

A graph is called an expander graph if for all partitions of its set of
vertices into two sets the ratio of the number of edges with endpoints in both
sets over the size of the smaller set is bounded from below by a constant
independent of the graph size. Roughly speaking, an expander is a very
well-connected graph and we are usually interested in expanders which are
regular and sparse. Expander graphs have long been an object of intense study
in discrete mathematics and theoretical computer science. Over time, they have
been used in a wide number of applications, such as for example
error-correcting codes \cite{sipser1996expander}, computational complexity
\cite{dinur2007pcp,reingold2008undirected}, inapproximability for
bounded-occurrence CSPs \cite{papadimitriou1991optimization} and others too
numerous to mention here (see \cite{hoory2006expander}).

One of the most fundamental results in this area, established by Bollobás in
1988 \cite{bollobas1988isoperimetric}, is that random $\Delta$-regular graphs
are expanders with high probability. Thus, though the existence of some
expander graph families had already been known
\cite{buser1984bipartition,lubotzky1988ramanujan}, Bollobás managed to show
that in fact almost all $\Delta$-regular graphs are expanders using
surprisingly elementary methods (straightforward counting arguments, as opposed
to the linear algebra techniques commonly used when dealing with expander
graphs).  Furthermore, Bollobás gave an asymptotic lower bound of
$\frac{\Delta}{2}-\Theta(\sqrt{\Delta})$ on the expansion of random
$\Delta$-regular graphs, and specific bounds for various small values of
$\Delta$. Interestingly, the asymptotic bound was later found to be optimal up
to the coefficient of $\sqrt{\Delta}$, as Alon showed that any $\Delta$-regular
graph (not just random graphs) must have a bi-section with expansion at most
$\frac{\Delta}{2}-\Theta(\sqrt{\Delta})$ \cite{alon1997edge}.

The main objective of this paper is to revisit this classical result of
Bollobás and to improve it through a more refined analysis. With this new
analysis we will show higher lower bounds on the expansion of random
$\Delta$-regular graphs. 

Our results move in two main directions. First, we will obtain better lower
bounds on the expansion of $\Delta$-regular graphs for various small values of
$\Delta$. One of the reasons we are interested in this is because improvements
in the currently known bounds could have applications in other areas, and in
particular one example is the inapproximability of bounded occurrence CSPs. For
instance, the fact that 6-regular graphs have expansion at least 1 (established
in \cite{bollobas1988isoperimetric}) is invoked a number of times in this area
(see \cite{BK03,BK01,BK99}). It is still open whether $\Delta=6$ is the
smallest degree for which this is true, and resolving this question would be of
immediate use in inapproximability reductions. Of course, beyond this, finding
what is the expansion of random $\Delta$-regular graphs for small values of
$\Delta$ is in itself a fundamental mathematical problem, intimately connected
to a number of other basic graph-theoretic questions such as the size of the
minimum bisection of random regular graphs. Unfortunately, little progress has
been made since \cite{bollobas1988isoperimetric} and in most cases the bounds
given by Bollobás are still the best known lower bounds, although some progress
has been made in establishing corresponding upper bounds through results on the
minimum bisection problem \cite{monien2001upper,diaz2003bounds,diaz2007bounds}.

Second, we will use our refined analysis to also improve the asymptotic bound
on the expansion as a function of $\Delta$, or more precisely, to show that the
current bound is not tight.  As mentioned, it is known that the correct value
is $\frac{\Delta}{2}-\Theta(\sqrt{\Delta})$ but the coefficients of
$\sqrt{\Delta}$ given by Bollobás in the lower bound ($\sqrt{\ln 2} \approx
0.83$) and Alon in the upper bound ($\frac{3}{16\sqrt{2}}\approx 0.13$) are
relatively far apart. Again, this is a fundamental question given as an open
problem in \cite{alon1997edge}. But furthermore, the main reason we would like
to give such a general analysis is to show that the refinements we make in the
analysis of small values of $\Delta$ are more than just an ad-hoc trick that
improves some small special cases, but in fact our technique applies to all
$\Delta$.

To explain the high-level idea of our method let us first recall the proof of
the main result of \cite{bollobas1988isoperimetric}. The idea there is to
calculate the probability that a certain fixed set of vertices of a random
$\Delta$-regular graph has exactly $c$ edges connecting it to its complement.
Clearly, as $c$ becomes smaller that probability also becomes smaller, so the
trick here is to calculate the largest value of $c$ that still makes the
probability $o(2^{-n})$. Afterwards, an application of the union bound ensures
that with high probability no set has $c$ or fewer edges crossing the cut,
giving a bound on the expansion. The best bound on $c$ is found by directly
writing out the probability that exactly $c$ edges cross the cut and performing
a tedious but straightforward calculation.

Since the probability that $c$ edges cross the cut is calculated exactly in
\cite{bollobas1988isoperimetric} one might expect that the best place to look
for an improvement may be in the application of the trivial union bound, which
completely ignores a very basic fact: The expansion ratios of two different
sets of vertices are far from independent. Somewhat surprisingly, we give an
argument that attempts to exploit this crucial fact while still relying on the
union bound.  The main idea is that if a set with small expansion exists, then
there must exist a set with the same or smaller expansion which is locally
optimal. A set is locally optimal in this context if exchanging a vertex from
it with a vertex from its complement cannot decrease its expansion.  Thus, to
establish that a graph's expansion is above a certain value it suffices to show
that no locally optimal set exists with expansion below this value. Again, we
want to find the maximum $c$ such that the probability that a fixed set of
vertices is locally optimal and has $c$ edges leaving it is $o(2^{-n})$ and, as
might be expected, this probability turns out to be significantly smaller than
the probability that the set simply has $c$ edges leaving it, allowing us to
end up with a larger $c$.

The main technical challenge arising here is to calculate this probability
accurately. In the case of \cite{bollobas1988isoperimetric} it is a simple
counting exercise to determine the number of configurations where exactly $c$
edges cross the cut obtaining a clean and relatively simple formula. In our
case however, as we will argue, the local optimality condition imposes some
constraints on the degrees of the vertices which make counting much more
complex. Specifically, in a locally optimal set all vertices have the majority
of their neighbors in their own set. Because of this complication a clean
closed formula for the number of configurations that give a locally optimal set
with $c$ edges coming out probably does not exist.  Given that, our main
technical effort is concentrated on determining the degree distribution of
maximum probability. Using a (different kind of) local optimality argument we
show that the number of neighbors each vertex has on the other side of the
partition can be assumed to follow roughly a binomial distribution, with some
undetermined parameters (Lemma \ref{lem:localmax}).

Having arrived at this technical tool we then apply it to the problem at hand.
First, as a warm-up we show that if one ignores the degree constraints it is
not hard to calculate the parameters of the degree distribution. We thus, after
some calculations, arrive at exactly the same bound given in
\cite{bollobas1988isoperimetric}, albeit through a rockier road and gaining
some valuable insight in the kind of configurations that lead to bad cuts. Then
we go on to exploit this insight combined with the local improvements idea to
improve on the expansion lower bounds for specific values of $\Delta$. This
leads to a set of optimization problems which we solve numerically to obtain
our new results (note that the correctness of the bounds we give can however be
verified by simple calculations).  Our bounds improve on those given in
\cite{bollobas1988isoperimetric} and for $\Delta\ge 4$ improve the state of the
art (for cubic graphs the best result is still given in \cite{kostochka}, but
the complicated analysis there is restricted only to $\Delta=3$).  Finally, we
show how our analysis can be applied for general $\Delta$ and that an improved
asymptotic lower bound on the expansion of random $\Delta$-regular graphs can
be obtained. The new coefficient of $\Delta$ we get offers only a small
improvement over the one given in \cite{bollobas1988isoperimetric}, so we do
not expend much effort trying to calculate it exactly. Instead, for the sake of
simplicity we simply show that a strictly smaller coefficient is achievable.
Despite the lack of a large improvement in this constant, we believe there is
still interest in this result, in that it indicates that the bound given by
Bollobás is not the "right" bound, and also in that it shows that the local
improvement method we employ here applies generally and not just for small
values of $\Delta$.

\section{Definitions and Notation}

We use standard graph terminology and will be dealing with regular graphs only.
We use $\Delta$ to denote the degree, $n$ to denote the number of vertices of a
graph.  When we talk about random regular graphs we mean a graph constructed
via the following process. Take $\Delta n$ vertices (assume $\Delta n$ is
even), numbered $0,\ldots,\Delta n-1$.  First, select a perfect matching on
these vertices uniformly at random among all perfect matchings.  Then, for all
$k\in \{0,\ldots,n\}$ merge the vertices $k\Delta, k\Delta +1, \ldots,
(k+1)\Delta-1$ into a single vertex of degree $\Delta$. Observe that though
this process may produce a multi-graph, the probability that it produces a
simple graph is bounded away from 0 as $n\to \infty$ and all simple
$\Delta$-regular graphs are equiprobable (\cite{bollobas1988isoperimetric}).
The probability space therefore consists of $(\Delta n -1)(\Delta n-3)\ldots$
matchings. We denote this product by $(\Delta n)!!$ (and generally we denote by
$n!!$ the product of all odd positive integers which are less than or equal to
$n$). We use $\log n$ to denote the binary logarithm of $n$ and $\ln n$ to
denote the natural logarithm of $n$.

Let $S\subseteq V$ be a set of vertices with $|S|\le n/2$. Let $c(S)$ denote
the number of edges with exactly one endpoint in $S$. The edge expansion of $S$
is defined as $\frac{c(S)}{|S|}$. The edge expansion of $G$ is defined as
$i(G)=\min_{|S|\le n/2} \frac{c(S)}{|S|}$.

Given a partition of $V$ into $S, V\setminus S$ we define the out-degree of a
vertex as the number of neighbors the vertex has on the other side of the
partition. We will characterize cuts mainly by the number of vertices of each
out-degree in $S$ and $V\setminus S$.  For this, we will use
$(\Delta+1)$-dimensional vectors, denoted as $\vec{s}=(s_0, s_1, \ldots,
s_\Delta)$. 

More specifically, consider a set $S\subseteq V$, with $|S|\le n/2$.  Let $s_i,
i\in \{0,\ldots,\Delta\}$ denote the number of vertices in $S$ that have
exactly $i$ neighbors in $V\setminus S$. We write
$\vec{s}=(s_0,\ldots,s_\Delta)$ and we have $ \sum_i s_i = |S|$.  Note that by
definition $\sum_i i s_i =  c(S)$.  Also, we denote by $s_i', i\in
\{0,\ldots,\Delta\}$ the number of vertices in $V\setminus S$ that have exactly
$i$ neighbors in $S$ and let $\vec{s'}=(s_0',\ldots,s_\Delta')$.  Note that
$\sum_i i s_i' = c(S)$ and $\sum_i s_i+\sum_i s_i'=n$.  We use $d$ to denote
the maximum $i$ such that $s_i>0$ and $d'$ to denote the maximum $i$ such that
$s_i'>0$.

\section{Main Analysis}

In this section we establish the main tools we will need in the analysis of the
expansion of random regular graphs. The main result gives a general form for
the distribution of out-degrees $\vec{s}$ that has maximum probability,
conditioned on the size of the cut (Lemma \ref{lem:localmax}). In the end of
this section we apply this tool to re-prove the main theorem of
\cite{bollobas1988isoperimetric}, giving the additional insight that the
distribution of out-degrees in the configuration of maximum probability is a
binomial distribution. In the following sections we use Lemma
\ref{lem:localmax} together with some observations on $d,d'$ to obtain improved
results. However, for now we will work only with the basic case where there is
no constraint on the out-degress and $d=d'=\Delta$.  As a result all sums given
in this section can be read to range from $i=0$ to $\Delta$.

We are given a random $\Delta$-regular graph $G(V,E)$. We want to examine the
asymptotic (in $n$) behavior of the expansion of such $\Delta$-regular graphs.
In other words, for each fixed $\Delta$ we would like to have a lower bound on
a random graph's expansion which holds with high probability for sufficiently
large graphs.  For concreteness, say we always assume that $n
> \Delta^{20}$. To avoid inessential difficulties let us also assume that $n$
is even.


Fix a set of vertices $S$. Our first step is to write down exactly the
probability that a certain configuration of out-degrees occurs for $S$ and
$V\setminus S$ and that it produces a cut of size $c(S) = c$.  More precisely,
we will calculate the probability that, given $\vec{s}=(s_0, s_1,
\ldots,s_\Delta)$ and $\vec{s'}=(s_0, s_1, \ldots,s_\Delta)$ such that $\sum_i
s_i = |S|$, $\sum_i s_i' = |V\setminus S|$ and $\sum_i is_i = \sum_i is_i' = c$
the set $S$ has exactly $c$ edges coming out, and these are distributed in a
such a way that for all $i$ there are $s_i$ vertices in $S$ with $i$ neighbors
in $V\setminus S$ and $s_i'$ vertices in $V\setminus S$ with $i$ neighbors in
$S$.  That probability is 

$$ P(\vec{s}, \vec{s'}) =  \frac{|S|!}{s_0!s_1!\ldots
s_\Delta!}\prod_{i=0}^{\Delta} {\Delta \choose i} ^{s_i} \frac{|V\setminus
S|!}{s_0'!s_1'!\ldots s_\Delta'!}\prod_{i=0}^{\Delta} {\Delta \choose i}
^{s_i'} c! \frac{(\Delta |S|-c)!!(\Delta |V\setminus S|-c)!!}{(\Delta n)!!}$$

Let us explain this. The first factor comes from deciding how to partition the
set $S$ into groups of vertices with the prescribed number of neighbors in
$V\setminus S$. Recall that each vertex in our graph is derived from a group of
$\Delta$ vertices in the original configuration where we take a random perfect
matching. Therefore, if we decide that a vertex has $i$ edges to vertices in
$V\setminus S$ we must decide which of the $\Delta$ vertices of the group are
the endpoints of these edges. This gives the second factor. The third and
fourth factors are similar for $V\setminus S$. The $c!$ comes from deciding how
to match the vertices from each side that have neighbors on the other.
Finally, we take the product of the total number of matchings for the rest of
the vertices of $S$ and $V\setminus S$. This is divided by the total number of
matchings overall. Recall now that we have written $P(\vec{s},\vec{s'})$ for a
specific set of vertices $S$. By union bound, the probability that some "bad"
set of vertices exists, such that its size is $u$, the cut has $c$ edges and
the out-degree distributions of the set and the complement are described by
$\vec{s}, \vec{s'}$ is at most ${n\choose u} P(\vec{s},\vec{s'})$.

Let $\mathbb{U}(u,c)$ be the set of all pairs $(\vec{s}, \vec{s'})$ such that
$\sum_i s_i=u$, $\sum_i (s_i+s_i')=n$ and $\sum_i is_i = \sum_i is_i' = c$.  In
other words, $\mathbb{U}$ is the set of all possible vectors describing a
configuration where $|S|=u$ and there are $c$ edges crossing the cut. Suppose
we could show that 

\begin{eqnarray}
 \lim_{n\to \infty} \sum_{u=0}^{n/2} \ \ \sum_{c=0}^{\lfloor \phi u \rfloor} \
\ \sum_{(\vec{s},\vec{s'})\in \mathbb{U}(u,c)} {n\choose u} P(\vec{s},\vec{s'})
&=& 0 \label{eqn:sum}
\end{eqnarray}

for some parameter $\phi$ depending on $\Delta$. Then we would have that with
high probability a random $\Delta$-regular graph has expansion at least $\phi$
from a simple application of the union bound. Our goal is to determine the
largest possible parameter $\phi(\Delta)$ such that this holds, and recall that
by the results of \cite{bollobas1988isoperimetric} and \cite{alon1997edge} we
expect the correct answer to be $\frac{\Delta}{2}-\Theta(\sqrt{\Delta})$.

Observe that the above sum has at most $\Delta n^{2\Delta+2}$ terms, since
$u\le n$, $c\le \Delta u$ and there are at most $n^\Delta$ different vectors
$\vec{s}$ such that $|s|= u$ for all $u$. Let $\vec{s_M}, \vec{s_M'}$ be such
that $P(\vec{s_M},\vec{s_M'})$ is maximum among all $(\vec{s},\vec{s'}) \in
\mathbb{U}(u,c)$ for all $u\le n/2$ and $c\le \phi u$. It is sufficient for us
to have that ${n\choose \frac{n}{2}}P(\vec{s_M},\vec{s_M'}) = o( \Delta^{-1}
n^{-2\Delta-2}) $.  In fact, we will have the much stronger condition
${n\choose \frac{n}{2}}P(\vec{s_M},\vec{s_M'}) \le \alpha^{-\Delta n}$ for some
$\alpha>1$, or in other words we will show that the probability that a bad set
exists is exponentially small.

\medskip

We have therefore reduced the question to that of determining a bad
configuration with maximum probability and upper-bounding that probability. We
would like to look for the vectors $\vec{s},\vec{s'}$, while avoiding some
extreme cases. We can do that by using the fact that small perturbations in
these vectors do not affect the probability too much. This is shown in the
following lemma which essentially shows that changing some of the coordinates
of a configuration vector $(\vec{s},\vec{s'})$ has a small impact on the
probability.

\begin{lemma} \label{lem:small}

If the configuration vectors $(\vec{s}, \vec{s'}), (\vec{p}, \vec{p'})$ satisfy
$\max_i \{|s_i-p_i|\} \le n^\frac{3}{4}$ and $\max_i \{|s_i'-p_i'|\} \le
n^\frac{3}{4}$ we have $P(\vec{s},\vec{s'}) \le 2^{o(  n)}
P(\vec{p},\vec{p'})$.

\end{lemma}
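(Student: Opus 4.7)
\medskip

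\noindent\textbf{Proof plan for Lemma \ref{lem:small}.} The plan is to take the logarithm of $P(\vec{s},\vec{s'})/P(\vec{p},\vec{p'})$ and show that it is $o(n)$, by carefully tracking how each factor in the formula for $P$ responds to a perturbation of size at most $n^{3/4}$ in each coordinate. Since $\Delta$ is constant and every factorial/double-factorial appearing in $P$ has argument at most $\Delta n$, the key elementary fact driving the whole argument is that, by Stirling's approximation,
$$
\ln\!\frac{(N+k)!}{N!} \;=\; k\ln N \,+\, O\!\left(\tfrac{k^2}{N}\right) \,+\, O(\log N)
\qquad\text{whenever } 0\le k\le N,
$$
and an identical estimate (with $\tfrac12$ in front) holds for $\ln((N+k)!!/N!!)$ via the identity $(2m)!!=2^m m!$ and $(2m-1)!!=(2m)!/(2^m m!)$. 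In both cases, when $N\le \Delta n$ and $k\le \Delta n^{3/4}$, the right-hand side is $O(n^{3/4}\log n) = o(n)$.

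First, I would handle the ``bulk'' factorials $|S|!$, $|V\setminus S|!$, $c!$, $(\Delta|S|-c)!!$, $(\Delta|V\setminus S|-c)!!$ and the denominator $(\Delta n)!!$. Note that a change of at most $n^{3/4}$ in each of the $2(\Delta+1)$ coordinates induces a change of at most $(\Delta+1)n^{3/4}=O(n^{3/4})$ in $|S|$ and $|V\setminus S|$, and of at most $\Delta(\Delta+1)n^{3/4}=O(n^{3/4})$ in $c=\sum_i i\,s_i$. Applying the Stirling estimate above to each of these factorials and double factorials gives a contribution of $O(n^{3/4}\log n) = o(n)$ to $|\ln(P(\vec{s},\vec{s'})/P(\vec{p},\vec{p'}))|$ per term, and there are $O(1)$ such terms (the denominator $(\Delta n)!!$ is the same for both configurations and drops out entirely).

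Next, for the denominators $s_i!$ and $s_i'!$ appearing in the multinomial coefficients, the same estimate applied individually gives $|\ln(s_i!/p_i!)| = O(n^{3/4}\log n)$ for each of the $2(\Delta+1)$ coordinates, hence $o(n)$ in total. Finally, for the product of binomial coefficients, the log of the ratio is $\sum_{i=0}^{\Delta} (s_i-p_i)\ln\binom{\Delta}{i} + \sum_{i=0}^{\Delta}(s_i'-p_i')\ln\binom{\Delta}{i}$, which is bounded in absolute value by $2(\Delta+1)n^{3/4}\cdot\Delta\ln 2 = o(n)$ since $\Delta$ is constant. Adding the $O(1)$ many contributions gives $\ln(P(\vec{s},\vec{s'})/P(\vec{p},\vec{p'})) = o(n)$, which is exactly the claim.

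The main technical point to watch — and really the only thing that could go wrong — is the case where some factorial argument is small (for instance $s_i$ close to $0$ while $p_i$ is of order $n^{3/4}$), because then the simple ``$k\ln N$'' expansion is not literally valid. This is handled by the more robust bound $\ln((N+k)!/N!) \le k\ln(N+k) + O(\log(N+k))$, which holds for all $0\le k \le N+k\le \Delta n$ and still evaluates to $O(n^{3/4}\log n)$ when $k\le \Delta n^{3/4}$; the same goes for the double factorials. Once this uniform bound is in place, everything collapses to routine bookkeeping over a constant number of terms.
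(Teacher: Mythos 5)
Your proof is correct and follows essentially the same route as the paper's: both bound the ratio $P(\vec{s},\vec{s'})/P(\vec{p},\vec{p'})$ factor by factor, showing that each of the $O(\Delta)$ factorial, double-factorial, and binomial-coefficient ratios contributes at most $2^{O(\Delta n^{3/4}\log n)}$ or $2^{O(\Delta^2 n^{3/4})}$, hence $2^{o(n)}$ in total. The only cosmetic difference is that you repeatedly invoke ``$\Delta$ is constant,'' whereas the paper keeps the $\Delta$-dependence explicit and uses the standing assumption $\Delta\le n^{1/20}$ to absorb the $\Delta^2 n^{3/4}$ terms into $o(n)$; under the paper's convention $n>\Delta^{20}$ this changes nothing.
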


Lemma \ref{lem:small} allows us to neglect some annoying cases, such as for
example sets with too small size ($o(n^\frac{3}{4})$). We now only need to find
the configuration of maximum probability among configurations where
$s_i>n^\frac{3}{4}$ for all $i$ and make sure that the probability is
$2^{-\Theta(\Delta n)}$. Then, by Lemma \ref{lem:small} we will know that
looking for the maximum through the space of all configuration vectors would
have made no difference.

Now, given fixed values of $u,c$ let us try to find the configuration vector
$\vec{s}$ that gives maximum probability. It is not hard to see that we are
trying to maximize the following function:

$$ F(\vec{s}) = \frac{\prod_{i=0}^{\Delta} {\Delta \choose i}
^{s_i}}{s_0!s_1!\ldots s_\Delta!}$$

In other words, we want to determine the distribution of the out-degrees of $u$
vertices so that $\sum i s_i = c$ and $F(\vec{s})$ is maximum.  Note that the
problem for the vector $\vec{s'}$ is essentially identical.

\begin{lemma} \label{lem:localmax}

Let $\vec{s^M}$ be a configuration vector such that $\sum_i s^M_i = u$, $\sum_i
is^M_i = c$, for all $i, s^M_i>n^\frac{3}{4}$ and $F(\vec{s^M})$ is maximum
among all vectors that satisfy the previous conditions. Then $|s^M_i - s^M_0
\left( \frac{s^M_1}{\Delta s^M_0} \right)^i {\Delta \choose i}| =
o(n^\frac{3}{4})$ for all $i, 0\le i \le \Delta$.

\end{lemma}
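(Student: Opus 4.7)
The plan is a discrete first-order optimality argument. Since $\vec{s^M}$ maximizes $F$ subject to the two linear constraints $\sum_i s_i = u$ and $\sum_i i s_i = c$, any integer perturbation of $\vec{s^M}$ that preserves both constraints is a competitor and must give a value of $F$ no larger than $F(\vec{s^M})$. The resulting inequalities, paired with their reverse-direction counterparts, will pin down the claimed geometric shape of $\vec{s^M}$.

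Concretely, for indices $0 \le i, k < \Delta$ with $\{i,i+1\} \cap \{k,k+1\} = \emptyset$, I would use the ``double swap'' that adds $1$ to $s_i^M$ and $s_{k+1}^M$ and subtracts $1$ from $s_{i+1}^M$ and $s_k^M$. This preserves both constraints since the weighted change is $i-(i+1)+(k+1)-k = 0$, and it is feasible because every $s_j^M > n^{3/4} \ge 1$. Plugging into the product form of $F$ gives a ratio between the perturbed $F$ and $F(\vec{s^M})$ equal to
$$\frac{s_{i+1}^M\,s_k^M}{(s_i^M+1)(s_{k+1}^M+1)} \cdot \frac{\binom{\Delta}{i}\binom{\Delta}{k+1}}{\binom{\Delta}{i+1}\binom{\Delta}{k}} \;\le\; 1,$$
while the reverse swap yields the reciprocal inequality. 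The ``$+1$'' corrections contribute at most a factor of $1 \pm O(n^{-3/4})$, since every $s_j^M$ exceeds $n^{3/4}$, so combining both inequalities pins
$$\frac{s_{i+1}^M/\binom{\Delta}{i+1}}{s_i^M/\binom{\Delta}{i}} \;=\; \frac{s_{k+1}^M/\binom{\Delta}{k+1}}{s_k^M/\binom{\Delta}{k}} \cdot \bigl(1 \pm O(n^{-3/4})\bigr)$$
for every admissible pair $(i,k)$. The degenerate cases in which the two index pairs overlap (giving two adjacent consecutive ratios) are handled by the analogous three-index perturbation that shifts $s_{i+1}^M$ by $\pm 2$ while shifting $s_i^M$ and $s_{i+2}^M$ by $\mp 1$.

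Consequently there is a common value $r$ with $s_{i+1}^M/s_i^M = r\,\binom{\Delta}{i+1}/\binom{\Delta}{i}\cdot (1 \pm O(n^{-3/4}))$ for every $i$; substituting $i=0$ identifies $r = s_1^M/(\Delta s_0^M)$. Iterating this near-geometric relation along a chain of length at most $\Delta = O(1)$ compounds into
$$s_i^M \;=\; s_0^M \binom{\Delta}{i}\!\left(\frac{s_1^M}{\Delta s_0^M}\right)^{\!i}\!\bigl(1 \pm O(n^{-3/4})\bigr).$$
Since $s_i^M \le n$, the multiplicative slack translates to an additive error of $O(n^{1/4}) = o(n^{3/4})$, which is exactly the bound claimed. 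The step that requires the most care, and which I expect to be the main point that needs to be written out carefully rather than the site of any real difficulty, is the propagation of the $O(n^{-3/4})$ slack through the chain of $\Delta$ ratios; because $\Delta$ is treated as constant in this section, $(1+O(n^{-3/4}))^\Delta = 1+O(n^{-3/4})$, so the bookkeeping closes cleanly. The only other minor subtleties are verifying feasibility of each perturbation and covering the overlapping-index cases, both of which follow immediately from $s_j^M > n^{3/4}$.
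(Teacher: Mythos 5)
Your proposal is correct and follows essentially the same route as the paper: the paper's perturbation vectors $\vec{p^i}$ (adding $1$ to $s_0,s_i$ and subtracting $1$ from $s_1,s_{i-1}$) are exactly your double swap anchored at $k=0$, and both arguments then derive the near-geometric recurrence for $s_i/\binom{\Delta}{i}$ and compound the multiplicative slack over a chain of length $\Delta$. The one caveat is that the paper's standing assumption is $n>\Delta^{20}$ rather than $\Delta=O(1)$, but your bookkeeping still closes there, since $(1+O(n^{-3/4}))^\Delta = 1+O(\Delta n^{-3/4})$ and $\Delta n^{1/4}=o(n^{3/4})$.
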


\begin{proof}

We define $\vec{p^i}$ for $i\ge 2$ to be the vector with $p^i_0=p^i_i=1$,
$p^i_1=p^i_{i-1}=-1$ and all other coordinates equal to 0. Intuitively,
$\vec{p^i}$ is a small perturbation vector, which does not change either the
size of $S$ or the size of the cut. Since $\vec{s^M}$ maximizes the function
$F$, it must be the case that both $\vec{s}^M+\vec{p}^i$ and
$\vec{s}^M-\vec{p}^i$ give smaller values. So we should have
$\frac{F(\vec{s^M})}{F(\vec{s^M+\vec{p}^i})}\ge 1$ and also
$\frac{F(\vec{s^M})}{F(\vec{s^M-\vec{p}^i})}\ge 1$.

The first inequality gives $\frac{(s^M_0+1)(s^M_i+1)}{s^M_1s^M_{i-1}}
\frac{\Delta {\Delta \choose i-1}}{{\Delta \choose i}}\ge 1$. The second
inequality gives $\frac{(s^M_1+1)(s^M_{i-1}+1)}{s^M_0s^M_{i}} \frac{{\Delta
\choose i}}{\Delta {\Delta \choose i-1}}\ge 1$.

Observe that $\frac{\Delta {\Delta \choose i-1}}{{\Delta \choose i}} =
O(\Delta^2)$. Also, $\frac{s^M_i}{s^M_js^M_k} = O(\frac{1}{\sqrt{n}})$ because
all coordinates of $s^M$ are at least $n^\frac{3}{4}$. Thus, the first
inequality gives $\frac{s^M_0s^M_i}{s^M_1s^M_{i-1}}\frac{\Delta {\Delta \choose
i-1}}{{\Delta \choose i}} \ge 1 - O(\frac{\Delta^2}{\sqrt{n}})$. The second
gives $\frac{s^M_1s^M_{i-1}}{s^M_0s^M_i}\frac{{\Delta \choose i}}{\Delta
{\Delta \choose i-1}} \ge 1 - O(\frac{\Delta^2}{\sqrt{n}})$. Together these
imply that $|\frac{s^M_0s^M_i}{s^M_1s^M_{i-1}}\frac{\Delta {\Delta \choose
i-1}}{{\Delta \choose i}}-1| = O(\frac{\Delta^2}{\sqrt{n}})$. Since the
right-hand side is very small, intuitively what we will do is to assume it is
essentially 0 and solve the recurrence relation.

More precisely, let $\vec{p}$ be the vector defined as follows: $p_0 = s^M_0,
p_1=s^M_1$ and $p_i = \frac{p_1p_{i-1}}{p_0}\frac{{\Delta \choose i}}{\Delta
{\Delta \choose i-1}}$ for all $i\ge 2$. It is then not hard to see (or verify
by induction) that $p_i = s^M_0 \left( \frac{s^M_1}{\Delta s^M_0} \right)^i
{\Delta \choose i}$.

We have that

$$ \left(1-O\left(\frac{\Delta^2}{\sqrt{n}}\right) \right)
\frac{s^M_1s^M_{i-1}}{s^M_0}\frac{{\Delta \choose i}}{\Delta {\Delta \choose
i-1}} \le s^M_i \le \left(1+O\left(\frac{\Delta^2}{\sqrt{n}}\right) \right)
\frac{s^M_1s^M_{i-1}}{s^M_0}\frac{{\Delta \choose i}}{\Delta {\Delta \choose
i-1}} $$

From this we get that $(1- O\left(\frac{\Delta^2}{\sqrt{n}}\right)) p_2 \le
s^M_2 \le (1+ O\left(\frac{\Delta^2}{\sqrt{n}}\right)) p_2$. By induction we
then have $(1- O\left(\frac{\Delta^2}{\sqrt{n}}\right))^{i} p_i \le s^M_i \le
(1+ O\left(\frac{\Delta^2}{\sqrt{n}}\right))^{i} p_i$. Using the fact that
$\Delta \le n^\frac{1}{20}$ we have $((1 \pm
O\left(\frac{\Delta^2}{\sqrt{n}}\right))^{\Delta} = 1\pm
O\left(\frac{\Delta^3}{\sqrt{n}}\right)$. Therefore, for all $i$ we have
$|s^M_i-p_i| \le p_i\cdot O\left(\frac{\Delta^3}{\sqrt{n}}\right) = O(\Delta^3
\sqrt{n}) = o(n^\frac{3}{4})$. \qed

 \end{proof}

Informally, as a result of Lemma \ref{lem:localmax} we can now assume that the
vector describing the configuration of maximum probability has $s_i = s_0
\left( \frac{s_1}{\Delta s_0} \right)^i {\Delta \choose i}$, since for
sufficiently large $n$ we know that the maximum has an edit distance from this
vector smaller than the one allowed by Lemma \ref{lem:small}. To ease notation
slightly we introduce the parameters $\beta, \gamma$ defined so that $s_0=\beta
\sum_i s_i$ and $s_1 = \gamma \Delta s_0$. 
We will also use $\beta', \gamma'$ similarly as parameters corresponding to the
vector $\vec{s}'$.

Going back to equation (\ref{eqn:sum}) we would like to show that the maximum
probability is achieved for bisections, that is, $u=n/2$. This is intuitively
unsurprising, and it is stated by the following lemma. 

\begin{lemma} \label{lem:bisections}

Let $\vec{b},\vec{b}'$ be two vectors such that $\sum_i b_i = \sum_i b_i' =
n/2$, $\sum_i ib_i = \sum_i ib_i' = c$ and these two vectors maximize the
quantity ${n\choose \frac{n}{2}}P(\vec{b},\vec{b}')$ among all vectors that
satisfy the previous two conditions. Then for all vectors $\vec{s},\vec{s}'$
with $\sum_i s_i = u \le n/2$, $\sum_i is_i = \sum_i is_i' = c$ we have
${n\choose u}P(\vec{s},\vec{s}') \le 2^{o( n)}{n\choose
\frac{n}{2}}P(\vec{b},\vec{b}')$.

\end{lemma}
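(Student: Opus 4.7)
The plan is to reduce Lemma \ref{lem:bisections} to a one-variable analytic comparison via Stirling's formula, and then verify that inequality in the regime of $c$ relevant to the rest of the argument.

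First, I would invoke Lemma \ref{lem:localmax} together with Lemma \ref{lem:small} to reduce to the case where both $\vec{s}$ and $\vec{s}'$ follow the binomial profile from Lemma \ref{lem:localmax}, since any other configuration differs from its local maximum by only a $2^{o(n)}$ factor. Under this reduction, one can further replace (up to polynomial factors in $n$) the multinomial piece $\frac{u!}{\prod s_i!}\prod \binom{\Delta}{i}^{s_i}$ by $\binom{\Delta u}{c}$, using the identity $\sum_{\vec{s}}\frac{u!}{\prod s_i!}\prod \binom{\Delta}{i}^{s_i} = \binom{\Delta u}{c}$ (the sum is over valid $\vec{s}$) together with the concentration of the binomial distribution; the same holds for $\vec{s}'$ with $\binom{\Delta(n-u)}{c}$.

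Second, I would set $\alpha = u/n$ and $\nu = c/n$ and apply Stirling's approximation (using $\ln((2m-1)!!) = m\ln(2m) - m + O(\ln m)$ for the double factorials) to every remaining factor. After careful bookkeeping all $\ln n$-contributions cancel, as do the constant $-n$ terms from Stirling, leaving
\[ \ln\left[\binom{n}{u}P(\vec{s},\vec{s}')\right] = n\, g(\alpha,\nu) + o(n), \]
where
\[ g(\alpha,\nu) = (\Delta-1)[\alpha\ln\alpha + (1-\alpha)\ln(1-\alpha)] + \frac{\Delta}{2}\ln\Delta - \nu\ln\nu - \frac{\Delta\alpha-\nu}{2}\ln(\Delta\alpha-\nu) - \frac{\Delta(1-\alpha)-\nu}{2}\ln(\Delta(1-\alpha)-\nu). \]

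Third, the remaining task is to show $g(\alpha,\nu) \le g(1/2,\nu)$ for all $\alpha \in (0,1/2]$ in the relevant range of $\nu$. The symmetry $\alpha \leftrightarrow 1-\alpha$ (which corresponds to swapping $\vec{s}$ with $\vec{s}'$) makes $\alpha = 1/2$ a critical point automatically. Differentiating yields $\partial_\alpha g = (\Delta-1)\ln\frac{\alpha}{1-\alpha} + \frac{\Delta}{2}\ln\frac{\Delta(1-\alpha)-\nu}{\Delta\alpha-\nu}$, and a second-derivative check at $\alpha = 1/2$ shows it is a local maximum whenever $\nu > \Delta(\Delta-2)/(4(\Delta-1))$; one then rules out any interior maximum on $(0,1/2)$ by examining the sign of $\partial_\alpha g$ on the feasibility boundary $\alpha = \nu/\Delta$.

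I expect the third step to be the main technical obstacle, since the desired inequality does fail for very small $\nu$, and one must exploit the restriction from (\ref{eqn:sum}) that only $c$ comparable to the target expansion $\phi u$ matters to keep $\nu$ above the threshold where the derivative analysis succeeds. The computation itself is elementary calculus but requires careful case-splitting to cover the full range of $\nu$ used in the sequel.
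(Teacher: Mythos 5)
Your reduction to the closed Bollob\'as form is clean (the identity $\sum_{\vec{s}}\frac{u!}{\prod s_i!}\prod_i{\Delta\choose i}^{s_i}={\Delta u\choose c}$ is correct, and the exponent $g(\alpha,\nu)$ you write down checks out term by term), but the third step is where the proof breaks, and it breaks for a reason your own computation makes visible. Your second-derivative criterion says $\alpha=\frac{1}{2}$ is a local maximum of $g(\cdot,\nu)$ only when $\nu>\frac{\Delta(\Delta-2)}{4(\Delta-1)}\approx\frac{\Delta-1}{4}$; but the cut densities that matter throughout the paper are $\nu=(1-\eta)\frac{\Delta}{4}$ with $\eta=\Theta(1/\sqrt{\Delta})$, which sit \emph{below} that threshold (for $\Delta=4$ one has $\eta\approx 0.78$, hence $\nu\approx 0.22$ against a threshold of $\frac{2}{3}$), so $\alpha=\frac{1}{2}$ is there a local \emph{minimum}. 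A direct evaluation confirms the failure: for $\Delta=4$, $\nu=0.22$ one gets $g(0.4,\nu)\approx 0.015$ versus $g(0.5,\nu)\approx 0.000$, so the fixed-$c$ comparison asserted by the lemma is genuinely false in the relevant regime. Your proposed rescue is also backwards: restricting to $c\le\phi u$ with $u\le n/2$ makes $\nu$ \emph{smaller}, not larger, and even at $u=n/2$ the value $(1-\eta)\frac{\Delta}{4}$ is already below the threshold. What is true, and what the union bound in (\ref{eqn:sum}) actually needs, is the comparison at equal expansion ratio rather than equal $c$: along the ray $c=\phi u$ the quantity $h(\alpha)=g(\alpha,\phi\alpha)$ is maximized at $\alpha=\frac{1}{2}$, because shrinking $u$ shrinks $c$ proportionally and the entropy loss in ${n\choose u}$ dominates. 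Reformulating step 3 that way (and observing that $g(\alpha,\cdot)$ is increasing on $[0,\phi\alpha]$, so the maximum over $c\le\phi u$ sits at $c=\phi u$) makes your Stirling computation deliver exactly what the sequel uses.

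For comparison, the paper's own proof is a discrete exchange argument rather than a calculus one: after invoking Lemmas \ref{lem:localmax} and \ref{lem:small} as you do, it deduces $s_0<s_0'$ from $u<n/2$, moves one out-degree-$0$ vertex from $V\setminus S$ to $S$, claims that the only factors of ${n\choose u}P(\vec{s},\vec{s}')$ affected are $\frac{1}{s_0!s_0'!}$ (so the quantity cannot decrease), and iterates until $u=n/2$. Be aware that this argument has precisely the blind spot your analysis exposes: the factors $(\Delta|S|-c)!!$ and $(\Delta|V\setminus S|-c)!!$ also depend on $|S|$, and each step multiplies their product by roughly $\left(\frac{\Delta u-c}{\Delta(n-u)-c}\right)^{\Delta/2}<1$, which for the relevant $\nu$ outweighs the gain $\frac{s_0'}{s_0+1}$. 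So your route, carried out honestly, is in a sense more informative than the paper's: it locates exactly where the fixed-$c$ claim fails and forces the correct equal-expansion-ratio reformulation, at the price of the case analysis in $\nu$ that you anticipated.
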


Thanks to Lemma \ref{lem:bisections} we are only interested in partitions of
size exactly $n/2$, since these give the maximum probability of achieving a cut
of size $c$ (perhaps modulo a $2^{o( n)}$ factor which will prove
insignificant). Therefore, for any expansion factor $\phi$ the probability of
finding a set with expansion less than $\phi$ is maximized when that set has
size exactly $\frac{n}{2}$.

 It is not hard to see that if the partition of the vertices is a bisection
then the distribution of out-degrees that gives maximum probability is the same
for the two parts. In other words, if $\vec{s},\vec{s}'$ are the two vectors
that give maximum probability for a cut of size $c$ and $\sum_i s_i = \sum_i
s_i'$ then it must be that $\vec{s}=\vec{s}'$. Using this, the probability that
a bisection which cuts $c$ edges exists is upper bounded by

$$ P_B(\vec{s}) \le {n \choose n/2} \left(\frac{(n/2)!}{s_0!s_1!\ldots
s_\Delta!}\right)^2 \prod_{i=0}^\Delta {\Delta \choose i}^{2s_i}
\frac{c!(\frac{\Delta n}{2}-c)!}{(\Delta n)!!} $$

where we have used the fact that $(n!!)^2<n!$. Now we would like to find the
range of values of $c$ for which the above quantity is very small, that is
$2^{-\Theta(\Delta n)}$. To ease notation, we write $c=(1-\eta)\frac{\Delta
n}{4}$, where $\eta >0$, and look for the correct range of $\eta$.

Using the facts that for large $n$ we have $\left( \frac{n}{e} \right)^n
<n!<\left( \frac{n}{e} \right)^npoly(n)$, ${n\choose n/2}< 2^n$ and $n!!>
\left( \frac{n}{e} \right)^{\frac{n}{2}} poly(n)$ we get that 

\begin{eqnarray*}
\log P_B(\vec{s})& \le& n + n \log\frac{n}{2e} - 2\sum_{i=0}^\Delta
s_i\log\frac{s_i}{e{\Delta \choose i}} + \frac{(1-\eta)\Delta n}{4}\log
\frac{(1-\eta)\Delta n}{4e} \\
&& + \frac{(1+\eta)\Delta n}{4}\log \frac{(1+\eta)\Delta n}{4e} - \frac{\Delta n}{2}\log\frac{\Delta n}{e} + o(n)
\end{eqnarray*}

As mentioned, assuming that $s_i=\beta \gamma^i {\Delta \choose i}\frac{n}{2}$
cannot affect the probability by more than $2^{o( n)}$, so with some
calculations and using the fact that $\sum s_i=\frac{n}{2}$ and $\sum
is_i=(1-\eta)\frac{\Delta n}{4}$  the upper bound becomes


 \begin{eqnarray} \log P_B &\le& n -n\log \beta - (1-\eta)\frac{\Delta
n}{2}\log{\gamma } -\Delta n\nonumber\\ && +\frac{(1+\eta)\Delta
n}{4}\log(1+\eta)+\frac{(1-\eta)\Delta n}{4}\log(1-\eta)+ o( n)
\label{eqn:prob1} \end{eqnarray}

What remains now is to find the values of $\beta,\gamma$ that give the maximum
probability. Thankfully, because $d=d'=\Delta$ we can use the binomial theorem
to find a clean solution to this problem as follows. Recall that $\sum_i s_i =
\sum_i \beta \gamma^i {\Delta \choose i} \frac{n}{2} = \beta (1+\gamma)^\Delta
\frac{n}{2} = \frac{n}{2}$.  Also, $\sum_i is_i = \sum_i i \beta \gamma^i
{\Delta \choose i} \frac{n}{2} = \sum_i \Delta \beta \gamma^i {\Delta-1 \choose
i-1} \frac{n}{2}= \beta \gamma \Delta (1+\gamma)^{\Delta-1}\frac{n}{2} =
(1-\eta)\frac{\Delta n}{4}$. Dividing these two equations gives $\gamma =
\frac{1-\eta}{1+\eta}$ and then $\beta=\left(\frac{1+\eta}{2}\right)^\Delta$.
Plugging these values into the upper bound for $P_B$ we get

\begin{eqnarray*} 
\frac{4log P_B}{\Delta n} &\le& \frac{4}{\Delta} - (1-\eta)\log(1-\eta) - (1+\eta)\log(1+\eta) + o(1/\Delta)
\end{eqnarray*}

The probability is therefore $2^{-\Omega(n)}$ whenever the right hand side is
less than some arbitrarily small negative constant.  \begin{theorem}
\label{thm:repeat}

For each $\Delta$, if $\eta>0$ satisfies
$(1-\eta)\log(1-\eta)+(1+\eta)\log(1+\eta) > \frac{4}{\Delta}$ then, for all
sufficiently large $n$, random $n$-vertex $\Delta$-regular graphs have
expansion at least $(1-\eta)\frac{\Delta}{2}$ with high probability.

\end{theorem}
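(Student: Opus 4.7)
The plan is to bundle the work already done in this section into a single union-bound application on the sum (\ref{eqn:sum}) with $\phi = (1-\eta)\Delta/2$. Since that sum has only $O(n^{2\Delta+2})$ terms (with $\Delta$ fixed), it suffices to show that the largest summand is $2^{-\Omega(\Delta n)}$; this will imply that the whole probability tends to $0$ and hence that no set $S$ with $|S|\le n/2$ and $c(S) \le (1-\eta)\Delta|S|/2$ exists with high probability, which is exactly expansion at least $(1-\eta)\Delta/2$.

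First, I would use Lemma \ref{lem:bisections} to restrict attention to $u=n/2$, losing only a $2^{o(n)}$ factor, and invoke the symmetry remark in the excerpt to set $\vec{s}=\vec{s}'$ at the maximum. Next, Lemma \ref{lem:small} lets me discard configurations where some coordinate is below $n^{3/4}$, again at cost $2^{o(n)}$. For the surviving configurations, Lemma \ref{lem:localmax} says the maximizer has the shape $s_i = \beta \gamma^i {\Delta \choose i}(n/2)$ up to additive $o(n^{3/4})$ error in each coordinate, which Lemma \ref{lem:small} again absorbs. So the maximum of $P_B$ over the relevant configurations is captured, up to a $2^{o(n)}$ factor, by optimizing the two scalar parameters $\beta,\gamma$.

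The parameters are then pinned down by the two side conditions $\sum_i s_i = n/2$ and $\sum_i i s_i = (1-\eta)\Delta n/4$. By the binomial theorem these read $\beta(1+\gamma)^\Delta = 1$ and $\beta\gamma\Delta(1+\gamma)^{\Delta-1} = (1-\eta)\Delta/2$, so dividing yields $\gamma = (1-\eta)/(1+\eta)$ and back-substituting gives $\beta = ((1+\eta)/2)^\Delta$. Plugging these values into (\ref{eqn:prob1}), the $n\log\beta$ and $(1-\eta)\frac{\Delta n}{2}\log\gamma$ pieces combine cleanly with the $(1\pm\eta)\log(1\pm\eta)$ terms to produce
$$ \frac{4 \log P_B}{\Delta n} \;\le\; \frac{4}{\Delta} - (1-\eta)\log(1-\eta) - (1+\eta)\log(1+\eta) + o(1/\Delta). $$

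Under the hypothesis of the theorem the right-hand side is bounded above by a fixed negative constant for all large $n$, so $P_B = 2^{-\Omega(\Delta n)}$, which dominates both the polynomial number of summands in (\ref{eqn:sum}) and the aggregated $2^{o(n)}$ slack, giving the claim. The one thing requiring care is the bookkeeping of the three independent $2^{o(n)}$ error sources (Lemma \ref{lem:small}, Lemma \ref{lem:localmax}, and the Stirling estimates built into (\ref{eqn:prob1})): each must be checked to be genuinely sublinear in $\Delta n$ so that none of them can absorb the $-\Omega(\Delta n)$ gain in the main exponent. Beyond this, no genuinely new ideas are needed; the theorem follows by assembling the pieces already proved.
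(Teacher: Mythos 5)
Your proposal is correct and follows essentially the same route as the paper: the theorem is obtained there precisely by combining Lemmas \ref{lem:small}, \ref{lem:localmax}, and \ref{lem:bisections}, solving for $\beta,\gamma$ via the binomial theorem, and substituting into (\ref{eqn:prob1}) before applying the union bound over the polynomially many configurations. Your attention to the $2^{o(n)}$ error bookkeeping matches the paper's treatment.
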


Thus, we have obtained again the main result of
\cite{bollobas1988isoperimetric}, albeit through a much more winding road. One
bonus is that through Lemma \ref{lem:localmax} we now have a better
understanding of the out-degree distribution of maximum probability.
Informally, one way to interpret this is to observe that by Lemma
\ref{lem:localmax} the fraction of vertices that have out-degree $i$ is
proportional to the probability that a binomial random variable with parameters
$(\Delta,\frac{\gamma}{\gamma+1})$ takes the value $i$. This intuition will
come in handy in the asymptotic analysis.

\section{Local Improvements}

In this section we introduce the idea of local improvements in the analysis. As
mentioned, the main idea is to bound the probability that we find a partition
that not only has small expansion, but also is locally optimal, in that
flipping a pair of vertices cannot make the expansion smaller. The consequence
of this property that we will rely on is that for a locally optimal set all
vertices have out-degress at most (roughly) $\frac{\Delta}{2}$. The analysis of
the previous section will come in handy now, since the only difference is that
$d$ and $d'$ are now smaller. Essentially all of it goes through unchanged, up
to the point where we used the binomial theorem to find the values of $\beta,
\gamma$. Unfortunately, finding a clean formula for the values of
$\beta,\gamma$ in our case now becomes a much harder task. In this section we
will be dealing with specific values of $\Delta$, and we will therefore be able
to calculate $\beta,\gamma$ and the minimum $\eta$ for which the graph is whp
an $(1-\eta)\frac{\Delta}{2}$ expander numerically. In the next section we will
discuss ways to bound the values of $\beta,\gamma$ to obtain an improved
asymptotic bound as $\Delta$ tends to infinity.

We will say that a set $S\subseteq V$, $|S|\le n/2$ is locally improvable if
there exist $u\in S$, $v\in V\setminus S$ such that $S\setminus\{u\}\cup\{v\}$
has a strictly smaller cut. If $S$ is not locally improvable we will say that
it is locally optimal. Recall that $d$ denotes the maximum $i$ such that
$s_i>0$ and $d'$ the maximum $i$ such that $s_i'>0$. Observe that whenever
there exists a set with expansion $\phi$ there must also exist a locally
optimal set with the same size and expansion $\le \phi$.

\begin{lemma} \label{lem:improve}

If $S$ is locally optimal then $d+d'\le \Delta + 1$. Therefore, if $S$ is
locally optimal then $\min\{d,d'\} \le \lceil \frac{\Delta}{2} \rceil$.

\end{lemma}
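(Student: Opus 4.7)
The plan is to prove $d + d' \le \Delta + 1$ by exhibiting a single swap whose effect on the cut I can compute exactly, and then appealing to local optimality. Using the definitions of $d$ and $d'$, I will pick a vertex $u \in S$ with out-degree exactly $d$ and a vertex $v \in V \setminus S$ with out-degree exactly $d'$, and consider $S' := (S \setminus \{u\}) \cup \{v\}$. Since $|S'| = |S| \le n/2$, local optimality of $S$ forces $c(S') \ge c(S)$, which is the only hypothesis I will use.

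The technical heart is then computing $c(S') - c(S)$ exactly. Only edges incident to $u$ or $v$ can change cut-status, so writing $e_{uv} \in \{0,1\}$ for the indicator that $uv$ is an edge, I will partition these edges into five groups according to where each endpoint lies before/after the swap: the $\Delta - d$ edges from $u$ into $S \setminus \{u\}$ newly enter the cut (contribution $+(\Delta - d)$); the $d - e_{uv}$ edges from $u$ into $(V \setminus S) \setminus \{v\}$ leave the cut ($-(d - e_{uv})$); symmetrically, the $d' - e_{uv}$ edges from $v$ into $S \setminus \{u\}$ leave the cut ($-(d' - e_{uv})$); the $\Delta - d'$ edges from $v$ into $(V \setminus S) \setminus \{v\}$ newly enter ($+(\Delta - d')$); and the edge $uv$ itself, if present, merely exchanges the labels of its endpoints and remains in the cut ($0$). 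Summing yields the clean identity
\[
c(S') - c(S) \;=\; 2\Delta - 2d - 2d' + 2 e_{uv}.
\]
Non-negativity of the left-hand side then gives $d + d' \le \Delta + e_{uv} \le \Delta + 1$, establishing the first claim. The ``therefore'' clause is immediate arithmetic: $\min\{d,d'\} \le \lfloor (d+d')/2 \rfloor \le \lfloor (\Delta+1)/2 \rfloor = \lceil \Delta/2 \rceil$.

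The only step I expect to require any care is the edge tally, specifically avoiding double-counting the edge $uv$ when separately accounting for the out-degrees of $u$ and $v$. That is precisely what keeps the $+e_{uv}$ term alive in the identity, and it is the reason the final bound is $\Delta + 1$ rather than $\Delta$; conversely, the same calculation shows that whenever the witness pair $u,v$ is non-adjacent the stronger constraint $d + d' \le \Delta$ must hold, which may be a useful observation to keep in reserve for the later refined analysis.
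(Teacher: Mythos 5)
Your proof is correct and is essentially the paper's argument: swap a vertex of out-degree $d$ in $S$ with one of out-degree $d'$ in $V\setminus S$ and tally the change in the cut, with the adjacency of the pair contributing the $+1$ slack. The paper phrases it as a contradiction and splits into the adjacent/non-adjacent cases rather than using an indicator $e_{uv}$, but the edge accounting is identical, and your closing remark that non-adjacent witnesses force $d+d'\le\Delta$ is exactly the refinement the paper exploits immediately after the lemma.
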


\begin{proof}

Suppose for contradiction that $d+d'\ge \Delta + 2$. Let $u\in S$ have $d$
neighbors in $V\setminus S$ and $v\in V\setminus S$ have $d'$ neighbors in $S$.
We swap $u$ and $v$.  First, suppose that $u,v$ are not connected. The size of
the new cut is $c - d - d' + \Delta - d + \Delta - d' = c + 2\Delta - 2(d+d') <
c$ contradicting the local optimality of $S$. If $u,v$ are connected then the
new cut has size $c - (d-1) - (d'-1)   + \Delta - d + \Delta - d' < c$, again
contradicting the local optimality of $S$. \qed

\end{proof}

In fact, we can say slightly more than what Lemma \ref{lem:improve} states.
Notice that the thorny case is when the two vertices of maximum out-degrees
happen to be connected. If there are $u\in S, v\in V\setminus S$ with
out-degrees $d, d'$ respectively such that $u,v$ are not connected then with
the same argument we can say that if $S$ is locally optimal then $d+d'\le
\Delta$. Now, if there are at least $\Delta+1$ different vertices in $S$ with
out-degree $d$, then we can always find such a pair $u,v$ and conclude that for
a locally optimal set $d+d'\le \Delta$. Furthermore, if there are at most
$\Delta$ vertices of out-degree $d$ in $S$, then since $\Delta\le
n^{\frac{1}{20}} = o(n^\frac{3}{4})$ we can use Lemma \ref{lem:small} to say
that the probability is essentially unchanged if we assume that there are no
vertices of out-degree $d$, therefore the maximum out-degree in $S$ becomes
$d-1$. As a result, we will from now on assume that $d+d'\le \Delta$ for
locally optimal sets.

As mentioned, the analysis of the previous section goes through: even if
vertices in $S$ have maximum out-degree $d$ and vertices in $V\setminus S$ have
maximum out-degree $d'$ the probability of a certain configuration is still
robust to small perturbations (Lemma \ref{lem:small}) and the most likely
degree distribution is still in of the form $s_i = \beta \gamma^i {\Delta
\choose i} |S|$ (Lemma \ref{lem:localmax}). Also, bisections are again the
interesting case, so we may assume that $|S|=\frac{n}{2}$.

Revisiting the calculation that brought us to inequality (\ref{eqn:prob1}) we
now get:

\begin{eqnarray}
 \frac{\log P_B}{n} &\le& 1 -\frac{1}{2}\log \beta -\frac{1}{2}\log \beta' - (1-\eta)\frac{\Delta }{4}\log{\gamma }- (1-\eta)\frac{\Delta }{4}\log{\gamma' }  -
\Delta \nonumber\\
&& +\frac{(1+\eta)\Delta
}{4}\log(1+\eta)+\frac{(1-\eta)\Delta }{4}\log(1-\eta)+ o(1) \label{eqn:prob2}
\end{eqnarray}

Furthermore, we have

\begin{eqnarray}
\sum_{i=0}^d \beta \gamma^i {\Delta \choose i} = \sum_{i=0}^{d'} \beta' \gamma'^i {\Delta \choose i} &&= 1 \label{eqn:beta1}\\
\sum_{i=1}^d \beta i \gamma^i {\Delta \choose i} = \sum_{i=1}^{d'} \beta' i \gamma'^i {\Delta \choose i} &&= \frac{1-\eta}{2}\Delta \label{eqn:beta2}
\end{eqnarray}

Recall that in the previous section where we had $d=d'=\Delta$ we were able to
use the binomial theorem to simplify the above equations and solve for
$\beta,\gamma,\beta',\gamma'$ as functions of $\eta, \Delta$. Plugging the
result back into (\ref{eqn:prob1}) and asking that the right hand side is
negative led to Theorem \ref{thm:repeat}. Here, it is not clear how to do
something similar. 

However, if $\Delta,d,d'$ have some known fixed values, we can do the
following: pick a candidate value for $\eta$ and solve (numerically) equations
(\ref{eqn:beta1},\ref{eqn:beta2}) for $\beta,\gamma,\beta',\gamma'$. Plug the
solutions into (\ref{eqn:prob2}) and check if the right hand side is negative.
In such a case we can conclude that whp a random $\Delta$-regular graph will
not have a bisection with maximum out-degrees $d,d'$ and $\frac{(1-\eta)\Delta
n}{4}$ edges crossing the cut. Given that in a locally optimal set $d+d'\le
\Delta$ and allowing the maximum out-degree to increase can only increase the
probability, we are essentially looking for the minimum $\eta$ such that the
right hand side of (\ref{eqn:prob2}) is negative for all possible pairs $d,d'$
with $d+d'=\Delta$. 

Performing this procedure for various small values of $\Delta$ we get the
results listed in Table \ref{tbl:numbers}. Despite the fact that (as with the
bound given in \cite{bollobas1988isoperimetric}) we need to rely on some
numerical polynomial equation solver to find these numbers, their correctness
can be verified relatively easily. Specifically, these bounds can be verified
by checking that $\sum_{i=0}^d\beta \gamma^i {\Delta \choose i} = 1$,
$\sum_{i=1}^d\beta i \gamma^i {\Delta \choose i} = \frac{1-\eta}{2}\Delta$, the
same conditions apply for $\beta',\gamma'$ and the logarithm of the probability
(as given in (\ref{eqn:prob2})) evaluates to a negative number.

\begin{table}
\begin{tabular}{|lll|l|l|l|l|l|l|l|}
\hline
$\Delta$	& $d$	& $d'$	& $\eta$	& $\beta$	& $\gamma$	& $\beta'$	& $\gamma'$	& $i$	& Previous $i$ \\
\hline
4		& 	& 	& 0.778		&       	&       	&		&		& 0.444 	& 0.440 \\
		& 2	& 2	& 		& 0.61807	& 0.12938	&		&		&		& \\
		& 1	& 3	& 		& 0.55600	& 0.19964	& 0.62432	& 0.12503	&		& \\
\hline
5		& 	& 	& 0.701		&       	&       	&		&		& 0.7475	& 0.730 \\
		& 2	& 3	& 		& 0.41817	& 0.19904	& 0.44211	& 0.17786	&		& \\
		& 1	& 4	& 		& 0.25250	& 0.59208	& 0.44488	& 0.17587	&		& \\
\hline
6		& 	& 	& 0.648		&       	&       	&		&		& 1.056		& 1.041 \\
		& 3	& 3	& 		& 0.30304	& 0.22263	& 		&		&		& \\
		& 2	& 4	& 		& 0.25436	& 0.28521	& 0.31196	& 0.21445	&		& \\
\hline
7		& 	& 	& 0.599		&       	&       	&		&		& 1.4035	& 1.372 \\
		& 3	& 4	& 		& 0.18683	& 0.27917	& 0.20505	& 0.25492	&		& \\
		& 2	& 5	& 		& 0.11337	& 0.46594	& 0.20842	& 0.25117	&		& \\
\hline
8		& 	& 	& 0.565		&       	&       	&		&		& 1.740		& 1.716\\
		& 4	& 4	& 		& 0.13258	& 0.29015	& 		&		&		& \\
		& 3	& 5	& 		& 0.10607	& 0.34576	& 0.13928	& 0.27975	&		& \\
		& 2	& 6	& 		& 0.02515	& 1.0425	& 0.14044	& 0.27812	&		& \\
\hline
9		& 	& 	& 0.531		&       	&       	&		&		& 2.1105	& 2.0655 \\
		& 4	& 5	& 		& 0.07715	& 0.33700	& 0.08734	& 0.31226	&		& \\
		& 3	& 6	& 		& 0.04607	& 0.46801	& 0.08982	& 0.30718	&		& \\
\hline
10		& 	& 	& 0.507		&       	&       	&		&		& 2.465		& 2.430 \\
		& 5	& 5	& 		& 0.05415	& 0.34150	& 		&		&		& \\
		& 4	& 6	& 		& 0.04182	& 0.39127	& 0.05798	& 0.32989	&		& \\
		& 3	& 7	& 		& 0.01319	& 0.71814	& 0.05885	& 0.32751	&		& \\
\hline
\end{tabular} \hspace{0.5cm}
\begin{tabular}{|l|l|l|l|}
\hline
$\Delta$	&	$\eta$	&	$i$ 	& Previous $i$ \\
\hline
11				& 0.482								& 2.849		& 2.794 \\
12		 	 	& 0.464		       	       					& 3.216		& 3.168 \\
13		 	 	& 0.444		       	       					& 3.614		& 3.549 \\
14		 	 	& 0.430		       	       					& 3.99		& 3.934 \\
15		 	 	& 0.414		       	       					& 4.395		& 4.320 \\
16		 	 	& 0.403		       	       					& 4.776		& 4.712 \\
17		 	 	& 0.389		       	       					& 5.1935	& 5.1085 \\
18		 	 	& 0.380		       	       					& 5.580		& 5.508 \\
19		 	 	& 0.368		       	       					& 6.004		& 5.909 \\
20		 	 	& 0.360		       	       					& 6.4		& 6.320 \\
30		 	 	& 0.294		       	       					& 10.59		& 10.47 \\
40		 	 	& 0.255		       	       					& 14.9		& 14.74 \\
\hline

\end{tabular}

\caption{Summary of numerically obtained lower bounds for various values of
$\Delta$.  The last column contains the numbers that follow by numerically
solving the condition given in \cite{bollobas1988isoperimetric}.  Due to space
constraints we only list the final results for $\Delta\ge 12$.}
\label{tbl:numbers}

\end{table}

Observe that the improvements we obtain, though quite modest (in the order of
$10^{-2}$ for $\Delta \le 10$ and $10^{-1}$ for somewhat larger $\Delta$) seem
to grow with $\Delta$. This is a fact explained also by the results of the next
section. 

\section{Asymptotic Bound}

In this section we will extend the local improvement analysis to general
$\Delta$ and show an asymptotic (in $\Delta$) lower bound on the expansion of
random $\Delta$-regular graphs. As mentioned, this is interesting in part
because it confirms the findings of the previous section that the gains from
this analysis increase with $\Delta$. 

Beyond this however, determining a tight asymptotic bound on this quantity is a
fundamental mathematical question. Recall that the lower bound given by
Bollobás is $\frac{\Delta}{2}-\sqrt{\Delta}\sqrt{\ln 2}$. It is very natural to
ask where the $\sqrt{\ln2}$ comes from and whether it is the "right" constant
here.  One intuitive explanation might be the following: consider a bisection
of the set of vertices. Each edge of the random graph has probability
(essentially) $\frac{1}{2}$ of crossing the cut, therefore the expected number
of edges crossing the cut is $\mu = \frac{\Delta n}{4}$. If the edges were
independent then the size of the cut would follow a binomial distribution with
$\sigma^2=\frac{\Delta n}{8}$. What is the maximum size $c$ such that we can
guarantee that a cut of this size has probability $o(2^{-n})$ of occurring? If
we approximate the binomial distribution by a normal distribution we get that
$\frac{1}{\sqrt{2\pi}\sigma} e^{-\frac{1}{2}(\frac{\mu-c}{\sigma})^2}$ must be
$o(2^{-n})$. Solving this gives
$c=\frac{n}{2}(\frac{\Delta}{2}-\sqrt{\Delta}\sqrt{\ln 2})$. In other words,
the analysis of \cite{bollobas1988isoperimetric} shows that the true
probability that a fixed set of vertices has $c$ edges coming out can be
approximated very well by assuming that edges are independent, an interesting
and natural result.

Nevertheless, as mentioned the weakness in the analysis of
\cite{bollobas1988isoperimetric} is the union bound, which would only be tight
if the probability that two different sets of vertices are expanding were
independent. In some way the local improvement idea plays on this
non-independence and, though we make only very small progress towards bridging
the gap between the known lower and upper bound, our results establish in an
indirect way that $\sqrt{ln2}$ is not the right constant and that there is
something significant lost by the union bound in the analysis of
\cite{bollobas1988isoperimetric}.

To simplify presentation, in this section we define a locally optimal partition
as one where $\min\{d,d'\}\le\lceil\frac{\Delta}{2}\rceil$. This is a slightly
weaker definition than what we use in the previous section. Assume without loss
of generality that $d\le d'$. Following similar reasoning as before, if there
exists a non-expanding set there must exist one with
$d\le\lceil\frac{\Delta}{2}\rceil$, and since allowing $d,d'$ to be larger can
only increase the probability we can assume that $d=\frac{\Delta}{2}$ and
$d'=\Delta$. Informally, we are applying the local improvement argument only to
one side of the partition to simplify things. In particular, now we don't need
to check various combinations of $d,d'$, or to prove that the maximum
probability is achieved when $d=d'$ (which would agree with the numerical
results of the previous section but would likely be complicated to prove
directly). Finally, as usual assume that we are dealing with bisections, since
these are the most interesting case and we will assume that $\eta <
\frac{2\sqrt{\ln2}}{\sqrt{\Delta}}$, since otherwise we already know from
\cite{bollobas1988isoperimetric} that the probability of such a set tends to 0.

Having set $d'=\Delta$ we can again use the binomial theorem and equations
(\ref{eqn:beta1},\ref{eqn:beta2}) to get $\gamma'=\frac{1-\eta}{1+\eta}$ and
$\beta'=\left(\frac{1+\eta}{2}\right)^\Delta$. Inserting these into
(\ref{eqn:prob2}) gives

\begin{eqnarray}
 \frac{\log P_B}{n} &\le& 1 - \frac{\Delta}{2} - \frac{1}{2}\log\beta - (1-\eta)\frac{\Delta}{4}\log\gamma +  o(1) \label{eqn:prob3}
\end{eqnarray}

The question therefore now becomes to determine $\beta,\gamma$ and the smallest
possible $\eta$ so that the right hand side of (\ref{eqn:prob3}) is negative.
Going back to the left hand side of equations (\ref{eqn:beta1},\ref{eqn:beta2})
we observe that

$$\sum_{i=0}^d \beta \gamma^i {\Delta \choose i} = \beta (\gamma+1)^\Delta
\sum_{i=0}^d \left(\frac{\gamma}{\gamma+1}\right)^i
\left(\frac{1}{\gamma+1}\right)^{\Delta-i} {\Delta \choose i}$$

We also have that 

$$\sum_{i=1}^d \beta i \gamma^i {\Delta \choose i} = \beta \Delta \sum_{i=1}^d
\gamma^i {\Delta-1 \choose i-1} = \beta \gamma \Delta (\gamma+1)^{\Delta-1}
\sum_{i=0}^{d-1} \left(\frac{\gamma}{\gamma+1}\right)^i
\left(\frac{1}{\gamma+1}\right)^{\Delta-i-1} {\Delta-1 \choose i}$$

where we used the fact that ${\Delta \choose i}=\frac{\Delta}{i}{\Delta-1
\choose i-1}$.

The main intuition we are going to use now is that the two sums can be written
more cleanly as probabilities using the binomial distribution. Specifically,
let $P_1$ be defined as $P_1 = \mathbf{Pr}[B(\Delta,\frac{\gamma}{\gamma+1})\le
d]$, where by $B(n,p)$ we denote a random variable that follows the binomial
distribution with parameters $n,p$. Also, let $P_2$ be defined as $P_2 =
\mathbf{Pr}[B(\Delta-1,\frac{\gamma}{\gamma+1})\le d-1]$. Informally, as
$\Delta$ tends to infinity we expect $P_1$ and $P_2$ to become almost equal.

Now, with the help of the above equations (\ref{eqn:beta1},\ref{eqn:beta2})
become

\begin{eqnarray}
 \beta (\gamma+1)^\Delta  P_1 &=& 1 \label{eqn:beta3}\\
 \beta  \gamma (\gamma+1)^{\Delta-1} P_2 &=& \frac{1-\eta}{2} \label{eqn:beta4}
\end{eqnarray}

Again, we will divide these two to get an equation for $\gamma$. To do this we
would like to establish a relationship between $P_1$ and $P_2$. Let $P_3 =
\mathbf{Pr}[B(\Delta,\frac{\gamma}{\gamma+1})= d]$.

\begin{lemma} \label{lem:p1p3}

Let $P_1,P_2,P_3$ as defined above. Then $P_1=P_2+\frac{\Delta-d}{\Delta}P_3$.

\end{lemma}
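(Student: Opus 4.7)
The plan is to prove the identity by a direct manipulation of binomial probabilities, using a standard conditioning argument to relate $B(\Delta,p)$ to $B(\Delta-1,p)$, where for brevity I write $p=\gamma/(\gamma+1)$. Let $X\sim B(\Delta,p)$ (so $P_1=\Pr[X\le d]$ and $P_3=\Pr[X=d]$), and let $Y\sim B(\Delta-1,p)$ (so $P_2=\Pr[Y\le d-1]$).

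The key step is to think of $X$ as the sum of $\Delta$ independent Bernoulli$(p)$ variables and to condition on the value of the last one. This gives the decomposition
$$\Pr[X=k] = p\cdot\Pr[Y=k-1] + (1-p)\cdot\Pr[Y=k],$$
and summing over $k=0,\ldots,d$ yields $P_1 = p\cdot\Pr[Y\le d-1] + (1-p)\cdot\Pr[Y\le d]$. Rewriting $\Pr[Y\le d]=\Pr[Y\le d-1]+\Pr[Y=d]$, the right-hand side collapses to $P_2 + (1-p)\Pr[Y=d]$. So it remains only to check that $(1-p)\Pr[Y=d] = \frac{\Delta-d}{\Delta}P_3$.

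This last step is a one-line binomial identity: $(1-p)\Pr[Y=d] = \binom{\Delta-1}{d} p^d(1-p)^{\Delta-d}$, and since $\binom{\Delta-1}{d} = \frac{\Delta-d}{\Delta}\binom{\Delta}{d}$, this equals exactly $\frac{\Delta-d}{\Delta}\binom{\Delta}{d}p^d(1-p)^{\Delta-d} = \frac{\Delta-d}{\Delta}P_3$. Combining everything gives the claimed identity.

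There is no real obstacle here; the only subtlety is picking the right decomposition so that the algebra stays clean. Conditioning on the last coin flip to relate $B(\Delta,p)$ to $B(\Delta-1,p)$ is the natural move since we want an identity mixing the two, and the factor $\frac{\Delta-d}{\Delta}$ appearing in the statement is exactly the Pascal-type ratio between $\binom{\Delta-1}{d}$ and $\binom{\Delta}{d}$, which confirms the approach. An equivalent proof could be given by splitting $\binom{\Delta}{i} = \binom{\Delta-1}{i-1}+\binom{\Delta-1}{i}$ term by term in the sum defining $P_1$, but the probabilistic derivation above is slightly cleaner.
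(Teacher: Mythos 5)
Your proof is correct and takes essentially the same route as the paper: both condompose $B(\Delta,p)$ by conditioning on the last Bernoulli trial to obtain $P_1 = P_2 + (1-p)\Pr[B(\Delta-1,p)=d]$ and then apply the identity ${\Delta-1\choose d}=\frac{\Delta-d}{\Delta}{\Delta\choose d}$. Your write-up is in fact slightly more careful: the paper's displayed intermediate formula writes $\mathbf{Pr}[B(\Delta-1,\frac{\gamma}{\gamma+1})=d-1]$ where (as its own verbal intuition and subsequent algebra make clear) it means $\mathbf{Pr}[B(\Delta-1,\frac{\gamma}{\gamma+1})=d]$, a typo your derivation avoids.
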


To ease notation slightly let
$\theta=\frac{\Delta-d}{\Delta}\frac{P_3}{P_1}=\frac{P_3}{2P_1}$. Dividing
equations (\ref{eqn:beta3},\ref{eqn:beta4}) we have
$\frac{\gamma+1}{\gamma}\frac{1}{1-\theta}=\frac{2}{1-\eta}$. Solving this for
$\gamma$ gives $\gamma=\frac{1-\eta}{1+\eta-2\theta}$. From this we get
$\beta=\left(\frac{1+\eta-2\theta}{2-2\theta}\right)^\Delta \frac{1}{P_1}$.
Plugging these into (\ref{eqn:prob3}) we get

\begin{eqnarray*}
 \frac{\log P_B}{n} &\le& 1 + \frac{\Delta}{2}\log(1-\theta) + \frac{1}{2}\log P_1 - (1-\eta)\frac{\Delta}{4}\log(1-\eta) -  (1+\eta)\frac{\Delta}{4}\log(1+\eta-2\theta) +  o(1) 
\end{eqnarray*}

After re-writing all logarithms to base $e$ we get 

\begin{eqnarray*}
 \frac{4\ln P_B}{n\Delta} &\le& \frac{4\ln2}{\Delta} + 2\ln(1-\theta) + \frac{2}{\Delta}\ln P_1 - (1-\eta)\ln(1-\eta) -  (1+\eta)\ln(1+\eta-2\theta) +  o(\frac{1}{\Delta}) 
\end{eqnarray*}

With a little foresight assume that in the end we will set $\eta$ sufficiently
large to have $\eta>\theta + \frac{1}{2\sqrt{\Delta}}$. In this case we would
have $\frac{\gamma}{\gamma+1}=\frac{1}{2}\frac{1-\eta}{1-\theta} \le
\frac{1}{2} (1-\eta+\theta)$. Therefore,  $\frac{\Delta\gamma}{\gamma+1}\le
\frac{\Delta}{2} - \frac{\sqrt{\Delta}}{4}$. Thus, the expected value of
$B(\Delta,\frac{\gamma}{\gamma+1})$ differs by at least half a standard
deviation from $d$. We can use this to bound $\theta$ from above using the
normal distribution to approximate $P_1,P_3$ we have $\theta\le \frac{1}{2}
\sqrt{\frac{2}{\pi\Delta}}\frac{e^{-\frac{1}{8}}}{0.68}\le
\frac{0.52}{\sqrt{\Delta}}$.

Now, using the fact that both $\eta$ and $\theta$ are
$O(\sqrt{\frac{1}{\Delta}})$ we can use the expansion
$\ln(1-x)=-x-\frac{x^2}{2}-O(x^3)$ and the above becomes

\begin{eqnarray*}
 \frac{4\ln P_B}{n\Delta} &\le& \frac{4\ln2}{\Delta} + \frac{2}{\Delta}\ln P_1  + \theta^2 - \eta^2 +  o(\frac{1}{\Delta}) 
\end{eqnarray*}

So, we essentially need to have that $\eta^2 > \frac{4\ln2}{\Delta} +
\frac{2}{\Delta}\ln P_1 + \theta^2$. If the sum of the last two terms is
negative the bound for $\eta$ is strictly smaller that $\frac{2\sqrt{\ln
2}}{\sqrt{\Delta}}$. We can see that this is the case as follows. 

Recall that $\frac{\gamma}{\gamma+1}=\frac{1}{2}\frac{1-\eta}{1-\theta} \ge
\frac{1-\eta}{2}$. We are interested in the case $\eta<\frac{2\sqrt{\ln
2}}{\Delta}$. So, the expected value of $B(\Delta,\frac{\gamma}{\gamma+1})$
differs from $d=\frac{\Delta}{2}$ by at most $\sqrt{\ln2}\sqrt{\Delta}$ which
is at most 2 standard deviations. Again using a normal approximation for $P_1$
we get that $P_1\le 0.84 \Rightarrow \ln P_1 \le -0.17$. Using the upper bound
on $\theta$ we have given already we have that the sum of the two terms is at
most $-\frac{0.07}{\Delta}$.  Thus, the lower bound for $\eta$ is strictly
smaller than the one given in \cite{bollobas1988isoperimetric}.

\begin{theorem}

There exists an $\alpha < 2\sqrt{\ln 2}$ such that whp random $\Delta$-regular
graphs have expansion at least
$(1-\frac{\alpha}{\sqrt{\Delta}})\frac{\Delta}{2}$.

\end{theorem}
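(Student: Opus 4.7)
The plan is to assemble the chain of estimates built up in the preceding paragraphs into the desired strict inequality on $\eta$. Under the weaker one-sided local-optimality definition of this section, whenever the expansion falls below $(1-\eta)\Delta/2$ there must exist a locally optimal bisection with $d=\lceil\Delta/2\rceil$ and $d'=\Delta$; Lemma~\ref{lem:bisections} justifies fixing $|S|=n/2$, so the task reduces to controlling the pair $(\beta,\gamma)$ on the $d$-side and the pair $(\beta',\gamma')$ on the $d'$-side.

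On the $d'=\Delta$ side the binomial theorem applies directly, yielding $\gamma'=(1-\eta)/(1+\eta)$ and $\beta'=((1+\eta)/2)^\Delta$ in closed form, which is exactly what produces the reduced upper bound (\ref{eqn:prob3}). On the $d$-side I would rewrite (\ref{eqn:beta1}) and (\ref{eqn:beta2}) using the binomial probabilities $P_1,P_2,P_3$; Lemma~\ref{lem:p1p3} then reduces them to a one-parameter relation involving $\theta=P_3/(2P_1)$, so that (\ref{eqn:beta3}) and (\ref{eqn:beta4}) can be solved as $\gamma=(1-\eta)/(1+\eta-2\theta)$ and $\beta=((1+\eta-2\theta)/(2(1-\theta)))^\Delta/P_1$. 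Substituting into (\ref{eqn:prob3}), converting to natural logarithms, and Taylor-expanding $\ln(1-x)$ to second order (justified since $\eta,\theta=O(1/\sqrt{\Delta})$) collapses everything to the clean criterion
\[
\eta^2 > \frac{4\ln 2}{\Delta} + \frac{2}{\Delta}\ln P_1 + \theta^2 + o(1/\Delta).
\]

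To finish, I need to exhibit some $\eta<2\sqrt{\ln 2}/\sqrt{\Delta}$ satisfying this. The key observation is that the mean $\Delta\gamma/(\gamma+1)$ of the governing binomial $B(\Delta,\gamma/(\gamma+1))$ sits strictly below $d=\Delta/2$ by $\Omega(\sqrt{\Delta})$ whenever $\eta-\theta=\Omega(1/\sqrt{\Delta})$. A normal approximation then simultaneously yields $\theta=O(1/\sqrt{\Delta})$ (so $\theta^2=O(1/\Delta)$) and $P_1$ bounded away from $1$ by an absolute constant, whence $\frac{2}{\Delta}\ln P_1+\theta^2\le -c/\Delta$ for some fixed $c>0$ independent of $\Delta$. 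That strictly negative slack is precisely what allows the required $\eta$ to drop below $2\sqrt{\ln 2}/\sqrt{\Delta}$, delivering the asserted $\alpha$.

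The main obstacle is the self-referential character of the estimates: $\gamma$ depends on $\theta$, which depends on $\gamma$ through $P_1$ and $P_3$. I would handle this by a bootstrap, starting from the crude a priori upper bound $\eta\le 2\sqrt{\ln 2}/\sqrt{\Delta}$ (outside of which the Bollob\'as analysis already gives the conclusion), deriving a first crude bound on $\theta$ via the normal approximation, and feeding this back to sharpen $\gamma$, $P_1$, and $P_3$ to the level required above. Since the goal is only to exhibit some $\alpha<2\sqrt{\ln 2}$ rather than to optimize it, rough Berry--Esseen-type error bounds on the normal approximation are entirely sufficient, and I would deliberately resist any temptation to push for the tightest numerical constant.
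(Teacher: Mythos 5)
Your proposal follows essentially the same route as the paper: the one-sided local-optimality reduction to $d=\lceil\Delta/2\rceil$, $d'=\Delta$, the binomial-theorem closed form for $\beta',\gamma'$, the $P_1,P_2,P_3$ and $\theta$ parametrization via Lemma~\ref{lem:p1p3}, the second-order expansion yielding the criterion $\eta^2>\frac{4\ln 2}{\Delta}+\frac{2}{\Delta}\ln P_1+\theta^2$, and the normal-approximation endgame. The only caution is that your ``whence $\frac{2}{\Delta}\ln P_1+\theta^2\le -c/\Delta$'' does not follow from the orders of magnitude alone --- one must actually compare the constants (the paper gets $\theta\le 0.52/\sqrt{\Delta}$, hence $\Delta\theta^2\le 0.28$, against $2|\ln P_1|\ge 0.34$, leaving slack $0.07/\Delta$), which your Berry--Esseen remark implicitly covers but should be made explicit.
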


\section{Conclusions}

In this paper we gave improved lower bounds on the expansion of random
$\Delta$-regular graphs for small values of $\Delta$. We also showed that the
asymptotic lower bound given by Bollobás is not tight. Perhaps more
importantly, we gave an alternative analysis of the result of
\cite{bollobas1988isoperimetric} incorporating the distribution of out-degrees
and showing a general form. We believe that Lemma \ref{lem:localmax} could
prove to be useful in establishing other properties of random regular graphs as
well.

As to the problems of this paper, much remains to be done. First, we would like
to further refine our analysis, especially the asymptotic part. In particular,
it would be nice to prove that when the local improvement argument is applied
to both sides of the partition the maximum probability is achieved when $d=d'$.
This would agree with our numerical findings for small $\Delta$, but a general
proof has so far been elusive. Furthermore, it would be interesting to see if
the local improvement idea can be taken any further, especially for special
cases of small values of $\Delta$ where there may be room for more
combinatorial arguments. Finally, for a concrete open problem, it would be nice
to determine if there exist $5$-regular graphs with expansion at least 1. This
seems unlikely for random graphs, since the bound we give is still only
(almost) $0.75$ while there exists an upper bound of only slightly more than 1.
However, if such graphs exist they would prove very useful in the field of
inapproximability of bounded occurrence CSPs.

\textbf{Acknowledgement:} I am grateful to Johan Håstad who first pointed out to
me the idea of looking for a locally optimal set of vertices, which is the
basis of this paper.

\newpage

\bibliographystyle{abbrv} \bibliography{expanders}

\begin{thebibliography}{10}

\bibitem{alon1997edge}
N.~Alon.
\newblock On the edge-expansion of graphs.
\newblock {\em Combinatorics Probability and Computing}, 6(2):145--152, 1997.

\bibitem{BK99}
P.~Berman and M.~Karpinski.
\newblock On some tighter inapproximability results (extended abstract).
\newblock In J.~Wiedermann, P.~van Emde~Boas, and M.~Nielsen, editors, {\em
  ICALP}, volume 1644 of {\em Lecture Notes in Computer Science}, pages
  200--209. Springer, 1999.

\bibitem{BK01}
P.~Berman and M.~Karpinski.
\newblock Efficient amplifiers and bounded degree optimization.
\newblock {\em Electronic Colloquium on Computational Complexity (ECCC)},
  8(53), 2001.

\bibitem{BK03}
P.~Berman and M.~Karpinski.
\newblock Improved approximation lower bounds on small occurrence optimization.
\newblock {\em Electronic Colloquium on Computational Complexity (ECCC)},
  10(008), 2003.

\bibitem{bollobas1988isoperimetric}
B.~Bollob{\'a}s.
\newblock The isoperimetric number of random regular graphs.
\newblock {\em European Journal of combinatorics}, 9(3):241--244, 1988.

\bibitem{buser1984bipartition}
P.~Buser.
\newblock On the bipartition of graphs.
\newblock {\em Discrete applied mathematics}, 9(1):105--109, 1984.

\bibitem{diaz2003bounds}
J.~D{\i}az, N.~Do, M.~Serna, and N.~Wormald.
\newblock Bounds on the max and min bisection of random cubic and random
  4-regular graphs.
\newblock {\em Theoretical computer science}, 307(3):531--547, 2003.

\bibitem{diaz2007bounds}
J.~Diaz, M.~Serna, and N.~Wormald.
\newblock Bounds on the bisection width for random d-regular graphs.
\newblock {\em Theoretical Computer Science}, 382(2):120--130, 2007.

\bibitem{dinur2007pcp}
I.~Dinur.
\newblock The pcp theorem by gap amplification.
\newblock {\em Journal of the ACM (JACM)}, 54(3):12, 2007.

\bibitem{hoory2006expander}
S.~Hoory, N.~Linial, and A.~Wigderson.
\newblock Expander graphs and their applications.
\newblock {\em Bulletin of the American Mathematical Society}, 43(4):439--562,
  2006.

\bibitem{kostochka}
A.~Kostochka and L.~Melnikov.
\newblock Isoperimetric number of cubic graphs.
\newblock In V.~Kolchin, editor, {\em Probabilistic methods in discrete
  mathematics: proceedings of the third international Petrozavodsk conference,
  Petrozavodsk, Russia, May 12-15, 1992}, volume~3, pages 251--265. Vsp, 1993.

\bibitem{lubotzky1988ramanujan}
A.~Lubotzky, R.~Phillips, and P.~Sarnak.
\newblock Ramanujan graphs.
\newblock {\em Combinatorica}, 8(3):261--277, 1988.

\bibitem{monien2001upper}
B.~Monien and R.~Preis.
\newblock Upper bounds on the bisection width of 3-and 4-regular graphs.
\newblock {\em Mathematical Foundations of Computer Science 2001}, pages
  524--536, 2001.

\bibitem{papadimitriou1991optimization}
C.~Papadimitriou and M.~Yannakakis.
\newblock Optimization, approximation, and complexity classes.
\newblock {\em Journal of computer and system sciences}, 43(3):425--440, 1991.

\bibitem{reingold2008undirected}
O.~Reingold.
\newblock Undirected connectivity in log-space.
\newblock {\em Journal of the ACM (JACM)}, 55(4):17, 2008.

\bibitem{sipser1996expander}
M.~Sipser and D.~Spielman.
\newblock Expander codes.
\newblock {\em Information Theory, IEEE Transactions on}, 42(6):1710--1722,
  1996.

\end{thebibliography}

\newpage

\appendix

\section{Omitted Proofs}

\subsection{Proof of Lemma \ref{lem:small}}

\begin{proof}

We want to upper bound $\frac{P(\vec{s},\vec{s'})}{P(\vec{p},\vec{p'})}$. Since
$|s_i-p_i| \le n^\frac{3}{4}$ we have $\frac{p_i!}{s_i!} \le
2^{O(n^\frac{3}{4}\log n)} $. There are $2\Delta$ factors of this form, so
their total contribution is $2^{O(\Delta n^\frac{3}{4}\log n)}= 2^{o( n)}$.
There are also $2\Delta$ factors of the form ${\Delta \choose i}^{s_i-p_i}$,
each of which contributes at most $2^{\Delta n^\frac{3}{4}}$. Thus, their total
contribution is at most $2^{\Delta^2 n^\frac{3}{4}} = 2^{o( n)}$, using that
$\Delta \le n^\frac{1}{20}$.  Similarly, the change in the remaining factors is
also upper-bounded by $2^{\Delta^2 n^\frac{3}{4}} = 2^{o( n)}$, since the
number of edges crossing the cut cannot change by more than $\Delta^2
n^\frac{3}{4}$. \qed

\end{proof}

\subsection{Proof of Lemma \ref{lem:bisections}}

\begin{proof}

Consider a pair of vectors $\vec{s}, \vec{s}'$ as described above. We can
assume without loss of generality that $s_i = \beta \gamma^i {\Delta \choose
i}u$ and similarly for $s_i'=\beta'\gamma'^i{\Delta \choose i}(n-u)$, since by
Lemma \ref{lem:localmax} and Lemma \ref{lem:small} the actual maximum cannot
differ from the maximum achieved by such vectors by a factor of more than
$2^{o( n)}$. 

First, assume that $\sum_i s_i = u < n/2 < \sum_i s_i'$. Therefore, we have

\begin{eqnarray*}
\sum_i \beta \gamma^i {\Delta \choose i}u &<& \sum_i
\beta'\gamma'^i {\Delta \choose i}(n-u) \\
\sum_i i \beta \gamma^i {\Delta \choose i}u &=& \sum_i i
\beta'\gamma'^i {\Delta \choose i}(n-u) \\
\end{eqnarray*}

From these two we get that $\beta u< \beta'(n-u)$ or $s_0 < s_0'$. Consider now
the configuration that we can obtain from $\vec{s},\vec{s}'$ by decreasing
$s_0'$ by one and increasing $s_0$ by one, thus increasing $u$. We claim that
this new configuration achieves a value of ${n\choose u}P(\vec{s},\vec{s}')$
that is at least as high.  Indeed, because ${n\choose u}$ simplifies with the
factors $|S|!=u!$ and $|V\setminus S|!=(n-u)!$ into $n!$, the only factors
affected by this change are $\frac{1}{s_0!s_0'!}$. But since $s_0<s_0'$ this
product does not decrease by the change and we have a configuration with larger
$u$. Repeating this argument leads to $u=n/2$. \qed

\end{proof}

\subsection{Proof of Lemma \ref{lem:p1p3}}

\begin{proof}

From the definitions of $P_1,P_2$ it is not hard to see that
$P_1=P_2+\mathbf{Pr}[B(\Delta-1,\frac{\gamma}{\gamma+1})=
d-1]\cdot\frac{1}{\gamma+1}$. Intuitively, we will get at most $d$ successes in
$\Delta$ trials iff we either get at most $d-1$ in the first $\Delta-1$, or we
get exactly $d$ in the first $\Delta-1$ and fail in the last.

But now
$\mathbf{Pr}[B(\Delta-1,\frac{\gamma}{\gamma+1})=d-1]\cdot\frac{1}{\gamma+1}={\Delta-1\choose
d}\frac{\gamma^d}{(\gamma+1)^\Delta}=\frac{\Delta-d}{\Delta}{\Delta\choose
d}\frac{\gamma^d}{(\gamma+1)^\Delta}$, where we have used the fact that
${\Delta-1\choose d}=\frac{\Delta-d}{\Delta}{\Delta\choose d}$. \qed

\end{proof}

\end{document}